\documentclass[aip,pop,reprint,amsmath,amssymb,raggedbottom]{revtex4-2}

\usepackage{graphicx}
\usepackage{algorithm}
\usepackage{algpseudocode}
\usepackage{booktabs}
\usepackage{float}
\usepackage{xcolor}
\usepackage[colorlinks=true,linkcolor=blue,citecolor=blue,urlcolor=blue]{hyperref}

\usepackage{amsthm}
\newtheorem{theorem}{Theorem}
\newtheorem{corollary}{Corollary}

\newcommand{\vb}[1]{\mathbf{#1}}
\newcommand{\Jinv}{\vb{J}^{-1}}
\newcommand{\dt}{\Delta t}
\newcommand{\Oh}[1]{\mathcal{O}(#1)}
\newcommand{\Collo}{Collocated Boris}
\newcommand{\Stag}{Staggered Boris}
\makeatletter

\renewcommand{\p@subsection}{}
\def\@subsectioncntformat#1{\csname the#1\endcsname\quad}
\def\@subsubsectioncntformat#1{\csname the#1\endcsname\quad}
\makeatother

\begin{document}

\title{A Collocated Boris Integrator in Flux Coordinates:\\ Balancing Accuracy, Conservation, Cost and Robustness}

\author{Mingyuan Li}
\affiliation{Yuanpei College, Peking University, Beijing 100871, China}
\author{Chang Liu}
\affiliation{Department of Physics, Peking University, Beijing 100871, China}
\date{\today}

\begin{abstract}
When the guiding-center description fails and the full gyromotion must be resolved for energetic particles in complex configurations like stellarators, charged-particle integrators must be formulated directly in the curvilinear flux coordinates. The Boris algorithm, which adopts a staggered scheme in Cartesian coordinates, is phase-space-volume-preserving and second-order accurate; but a direct port to flux coordinates degrades the position update to first order, because the evolving basis vectors of the curvilinear frame make the starting-point metric deviate from the ideal midpoint metric. We construct a collocated, midpoint-predicted Boris algorithm in flux coordinates, restoring second-order accuracy at the cost of one additional field evaluation per step. In reactor-scale stellarator magnetic fields, the scheme recovers second-order convergence in every coordinate component, retains near-machine-precision energy conservation and a bounded magnetic moment, and demonstrates greater orbit robustness than Staggered Boris and RK4 at coarse time steps.
\end{abstract}

\maketitle

\section{Introduction}\label{sec:intro}

Particle-in-cell (PIC) simulation is the workhorse for the multi-scale dynamics of magnetized plasmas, advancing charged particles together with the self-consistent electromagnetic fields~\cite{Birdsall1985}. When the relevant physics involves waves near or above the cyclotron frequency---as in radio-frequency heating and current drive---or energetic particles such as fusion-born alphas, whose finite orbit width and resonant interactions escape the guiding-center (GC) description, the full orbit including gyromotion must be integrated over many cyclotron periods. For this task the Boris algorithm~\cite{Boris1970,Birdsall1985} has remained the de facto standard for over four decades. It is a staggered leapfrog: position and velocity live on interleaved, half-step-offset time layers, and the magnetic field acts on the velocity only through a rotation. Because that rotation preserves the speed, the scheme conserves kinetic energy to machine precision in a static field. Qin \textit{et al.}~\cite{Qin2013} proved that the Boris algorithm, though \textit{not} symplectic, exactly preserves phase-space volume.

Under complex magnetic configurations, however, the Cartesian formulation is rarely convenient. Flux coordinates $\alpha^i=(s,\theta,\phi)$ are far better suited to the three-dimensional geometry of tokamaks and stellarators and to exploiting their field symmetries, so porting the Boris algorithm to such curvilinear systems is highly desirable. Wei \textit{et al.}~\cite{Wei2015} carried out this extension, retaining the staggered scheme in flux coordinates, where it conserves kinetic energy and reproduces guiding-center banana orbits and toroidal precession. Their position advance, however, evaluates the metric at the starting point, $\vb{x}_{n+1}=\vb{x}_n+\dt\,\Jinv(\vb{x}_n)\,\vb{v}_{n+1/2}$, rather than at the trajectory midpoint. As they acknowledge, this introduces an $\Oh{\dt^2}$ error that degrades the coordinate update to first order.

This degradation is geometric in origin. Cartesian coordinates are special because their basis vectors are constant, so the direct midpoint scheme is automatically second-order. In curvilinear coordinates the basis vectors evolve along the trajectory, $\vb{e}_i(\vb{x}_{n+1/2})=\vb{e}_i(\vb{x}_n)+\Oh{\dt}$, and freezing them at the starting point discards this leading $\Oh{\dt}$ variation, leaving a first-order position update even though the Boris velocity rotation remains second-order.

Inspired by Kunz's modified Boris push for the hybrid-kinetic PIC code \textit{Pegasus}~\cite{Kunz2014}---where position and velocity share a time layer and the fields are gathered at a predicted half-step position on a uniform Cartesian grid---we carry this collocated structure into curvilinear flux coordinates. We call the resulting scheme \Collo{}; a single explicit midpoint-prediction step restores the second-order accuracy of the Cartesian Boris method (local truncation error $\Oh{\dt^3}$ in the position) at the cost of one extra field evaluation per step. Across $576$ initial conditions on the quasi-axisymmetric (QA) and quasi-helically symmetric (QH) reactor-scale equilibria of Landreman and Paul~\cite{Landreman2022}, \Collo{} attains a position-error convergence exponent $\alpha\approx3$ in every coordinate component, versus $\alpha\approx2$ for \Stag{} and $\alpha\approx5$ for RK4; it retains the near-machine-precision energy conservation ($\sim10^{-13}$) of Boris-type integrators, keeps the magnetic moment $\mu$ far better bounded than the \Stag{} scheme, and tolerates much coarser steps than RK4. \Collo{} thus combines the orbit-topological correctness of RK4 with the bounded-error conservation of Boris, well suited to PIC simulations running beyond $10^4$ cyclotron periods.

Section~\ref{sec:comparison} compares the three integrators and gives their update formulas; Section~\ref{sec:experiments} reports the numerical experiments; Section~\ref{sec:theory} provides the theoretical error analysis; and Section~\ref{sec:conclusions} concludes. The detailed single-step algorithm and conservation diagnostics are collected in Appendix~\ref{sec:algorithm}.

\section{Structural Comparison of Integration Schemes}\label{sec:comparison}

We benchmark \Collo{} against two integrators: the \Stag{} scheme (Wei \textit{et al.}~\cite{Wei2015}) and RK4. All three advance the same six-dimensional phase-space state
\begin{equation}\label{eq:state}
\vb{x}=(s,\theta,\phi),\qquad \vb{v}=(v_x,v_y,v_z),
\end{equation}
the position $\vb{x}$ in flux coordinates and the velocity $\vb{v}$ in the fixed Cartesian frame. Its equations of motion are
\begin{equation}\label{eq:eom}
\begin{aligned}
\frac{d\vb{x}}{dt} &= \Jinv(\vb{x})\,\vb{v},\\
\frac{d\vb{v}}{dt} &= \frac{q}{m}\,\vb{v}\times\vb{B}_{\mathrm{cart}}(\vb{x}),
\end{aligned}
\end{equation}
where $\vb{J}=\partial\vb{X}_{\mathrm{cart}}/\partial\vb{x}$ is the Jacobian of the map from flux to Cartesian coordinates. Storing the velocity in the Cartesian frame confines all geometric information to the inverse Jacobian in the position update and keeps the velocity rotation free of metric cross-terms.

\subsection{Update Formulas}\label{sec:formulas}

This subsection gives the single-step update of each of the three schemes compared in this work: the \Stag{} scheme, the \Collo{} scheme, and the RK4 method. The two Boris variants share a common velocity rotation $\mathcal{B}$, which advances the velocity through the magnetic field over a step $\dt$,
\begin{equation}\label{eq:boris_rotate}
\begin{gathered}
\vb{t} = \frac{q\dt}{2m}\vb{B}, \qquad \vb{s}=\frac{2\vb{t}}{1+|\vb{t}|^2},\\
\mathcal{B}(\vb{v},\vb{B},\dt) = \vb{v}+(\vb{v}+\vb{v}\times\vb{t})\times\vb{s},
\end{gathered}
\end{equation}
and differ only in \emph{where the inverse Jacobian is evaluated in the position update}; RK4 instead advances position and velocity together.

\medskip

The \Stag{} scheme is a leapfrog on staggered time layers $(\vb{x}_n,\vb{v}_{n+1/2})$; following the leapfrog order it advances the position first and the velocity second:
\begin{align}
\vb{x}_{n+1} &= \vb{x}_n+\dt\,\Jinv(\vb{x}_n)\,\vb{v}_{n+1/2},\\
\vb{v}_{n+3/2} &= \mathcal{B}\bigl(\vb{v}_{n+1/2},\vb{B}(\vb{x}_{n+1}),\dt\bigr).
\end{align}
The position update freezes the metric at the starting point $\vb{x}_n$, half a step behind the midpoint $\vb{x}_{n+1/2}$ of the interval $[\vb{x}_n,\vb{x}_{n+1}]$ it advances; this lag is the $\Oh{\dt}$ mismatch that degrades the position update to first order.

\medskip

The \Collo{} scheme removes this lag with a midpoint-prediction step: it predicts the midpoint $\vb{x}_*$, evaluates the metric there, and advances the position with the symmetric average of the old and new velocities.
\begin{align}
\vb{x}_* &= \vb{x}_n+\tfrac{\dt}{2}\,\Jinv(\vb{x}_n)\,\vb{v}_n, \label{eq:pc_pred}\\
\vb{v}_{n+1} &= \mathcal{B}\bigl(\vb{v}_n,\vb{B}(\vb{x}_*),\dt\bigr), \label{eq:pc_rot}\\
\vb{x}_{n+1} &= \vb{x}_n+\dt\,\Jinv(\vb{x}_*)\,\tfrac{\vb{v}_n+\vb{v}_{n+1}}{2}. \label{eq:pc_corr}
\end{align}
Evaluating the metric at the predicted midpoint $\vb{x}_*$ rather than the starting point captures its leading variation and restores second-order accuracy (Section~\ref{sec:theory}).

\medskip

The RK4 scheme, included as a high-order non-symplectic reference, advances the full state $Z=(\vb{x},\vb{v})$ without distinguishing position from velocity.
\begin{align}
Z_{n+1} &= Z_n+\tfrac{\dt}{6}(\vb{k}_1+2\vb{k}_2+2\vb{k}_3+\vb{k}_4),
\end{align}
with the usual substages $\vb{k}_1=\dot{Z}(Z_n)$, $\vb{k}_2=\dot{Z}(Z_n+\tfrac{\dt}{2}\vb{k}_1)$, $\vb{k}_3=\dot{Z}(Z_n+\tfrac{\dt}{2}\vb{k}_2)$, and $\vb{k}_4=\dot{Z}(Z_n+\dt\,\vb{k}_3)$. It requires four field evaluations per step and is fourth-order accurate, but, lacking any volume-preserving structure, does not bound the energy error over long integrations.

\subsection{Algorithm Summary}\label{sec:algo_summary}

Table~\ref{tab:comparison} collects the structural differences between the three schemes; the accuracy orders they induce are established in Secs.~\ref{sec:experiments} and~\ref{sec:theory}. 

\begin{table}[H]
\centering\small
\caption{Comparison of integration schemes.}\label{tab:comparison}
\begin{tabular}{lccc}
\toprule
Property & \Stag{} & \Collo{} & RK4 \\
\midrule
Time layers & $(x_n,v_{n+1/2})$ & $(x_n,v_n)$ & $(x_n,v_n)$ \\
Metric point & $\vb{x}_n$ (start) & $\vb{x}_*$ (mid) & --- \\
Field evals/step & 1 & 2 & 4 \\
\bottomrule
\end{tabular}
\end{table}

\section{Numerical Experiments}\label{sec:experiments}

All experiments are conducted in flux coordinates through the SIMPLE code~\cite{Albert2020}, which supplies the VMEC/libneo geometric infrastructure ($\vb{J}$, $\Jinv$, $\vb{B}_{\mathrm{cart}}$, $A_\phi$) shared by all three integrators. The equilibria are the QA and QH \texttt{wout} files of Landreman \& Paul~\cite{Landreman2022} (Fig.~\ref{fig:configs}), two contrasting regimes of three-dimensional complexity. 

\begin{figure}[H]
\centering
\includegraphics[width=\columnwidth]{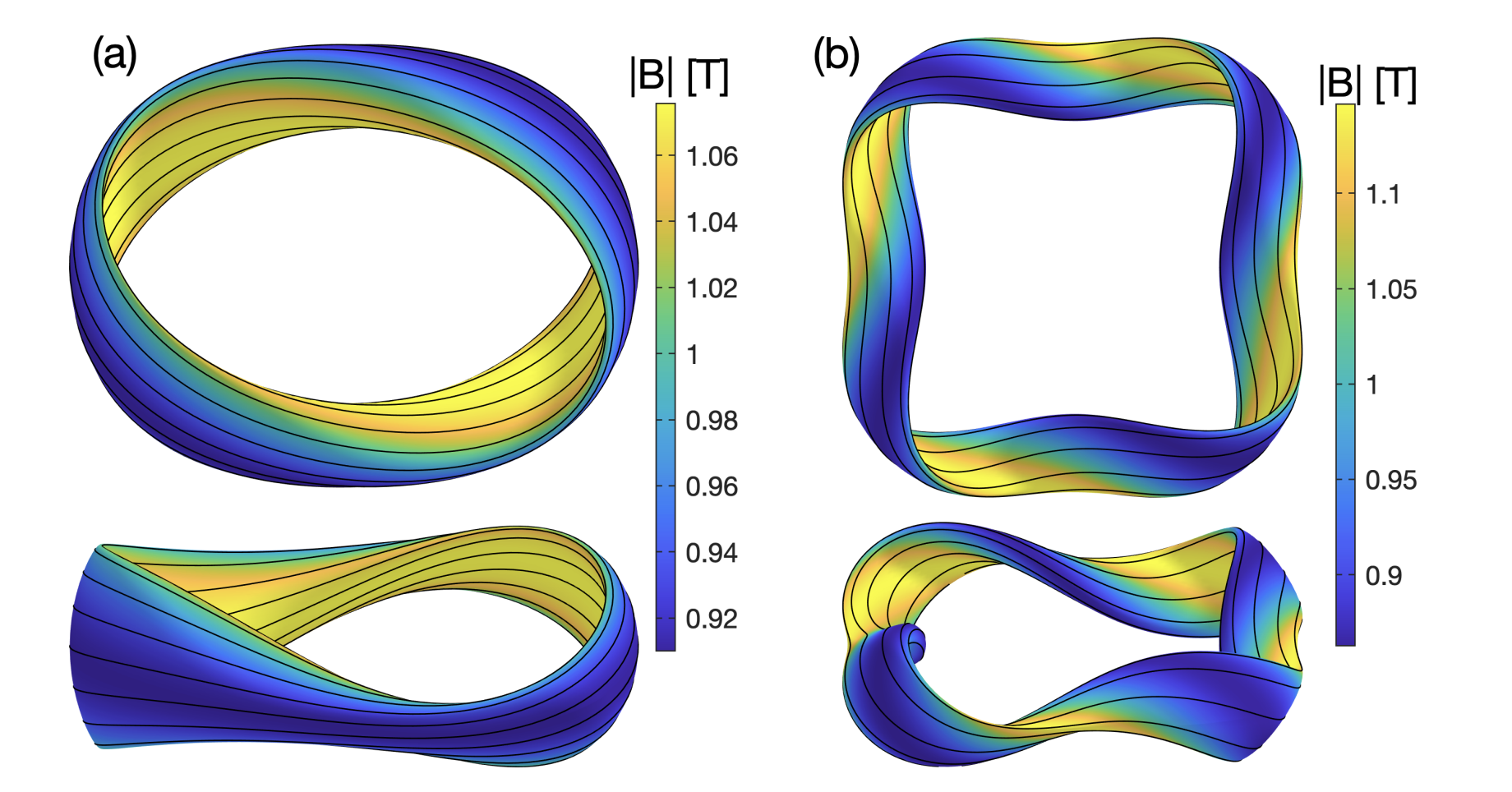}
\caption{Stellarator configurations: QA (left) and QH (right), from Landreman \& Paul~\cite{Landreman2022}.}
\label{fig:configs}
\end{figure}

We begin with the per-component convergence orders, the central quantitative result; we then illustrate their consequences on a benchmark orbit, the conservation diagnostics, and the robustness of each scheme to coarser steps.

\subsection{Error Order Analysis}\label{sec:error_order}

The defining quantitative claim is the per-component convergence order of the three schemes (\Stag{}, \Collo{}, RK4), measured through a single-step protocol. For each step $\dt$ we advance one full step with each scheme from a common initial state and compare against a ``true'' state obtained by integrating to $t=\dt$ with a finely resolved RK4 reference ($\dt_{\mathrm{ref}}=\dt/1000$); the error is the absolute difference per component. Sweeping $\dt$ over a logarithmic ladder of 25 steps spanning $[2\times10^{-7},8\times10^{-2}]$ yields an error-versus-$\dt$ curve whose log--log slope is the exponent $\alpha$, extracted by a sliding-window fit retaining windows with $R^2\geq0.9$. To remove any launch-point dependence, the protocol is repeated over a grid of launch angles $(\theta_0,\phi_0)$ (Table~\ref{tab:params}), drawn as one stratified-random sample per grid cell rather than on the exact grid points. This randomization is essential, not cosmetic: the exact grid points land disproportionately on high-symmetry phases, where the leading truncation coefficient is accidentally small and the fitted exponent is spuriously elevated (an order ``bonus''); sampling one random draw per cell averages can reduce this bias.

The \Stag{} scheme requires special care here: it advances position and velocity on mutually staggered time layers, while both the common initial state $(\vb{x}_0,\vb{v}_0)$ and the reference state $(\vb{x}_1,\vb{v}_1)$ are collocated. We therefore initialize it by advancing one of the two components half a step with the RK4 reference---to $(\vb{x}_{1/2},\vb{v}_0)$ or, equivalently, to $(\vb{x}_0,\vb{v}_{1/2})$---and read from each run the component that lands on the integer time layer: the velocity from the former and the position from the latter. Each measured error thus reflects a single genuine \Stag{} step compared at the same physical instant as the reference, and cannot be inflated above the scheme's true order. The QH exponent map is shown in Fig.~\ref{fig:error_random} and its population statistics in Table~\ref{tab:error_qh_random}; the QA case is given in Appendix~\ref{app:qa_results}.

\begin{figure*}[p]
\centering
\includegraphics[width=\textwidth]{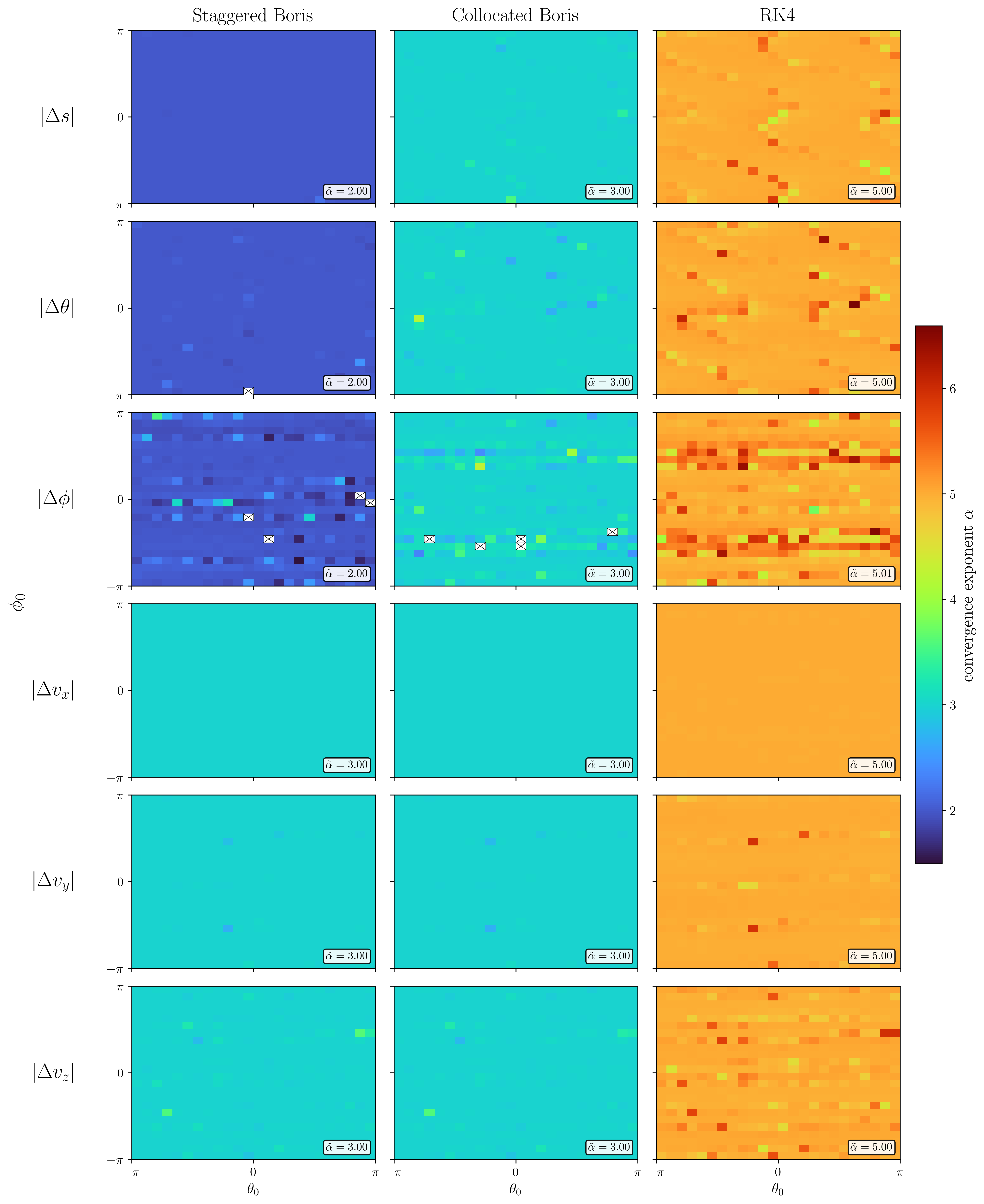}
\caption{Per-component convergence exponent $\alpha$ for QH (stratified-random $24\times24$ sampling in $(\theta_0,\phi_0)$ at $s_0=0.5$). Rows: coordinate and velocity components; columns: \Stag{}, \Collo{}, RK4. Color is the median fitted exponent; each panel is annotated with its median $\tilde\alpha$.}
\label{fig:error_random}
\end{figure*}

\begin{table}[H]
\centering\small
\caption{Convergence exponents $\alpha$ (QH, random $24\times24$, mean$\pm$std, $R^2\geq0.9$, 576 cases).}\label{tab:error_qh_random}
\begin{tabular}{lccc}
\toprule
Component & \Stag{} & \Collo{} & RK4 \\
\midrule
$|\Delta s|$      & $2.00\pm0.01$ & $3.00\pm0.04$ & $5.00\pm0.13$ \\
$|\Delta\theta|$  & $2.00\pm0.03$ & $3.00\pm0.07$ & $5.01\pm0.17$ \\
$|\Delta\phi|$    & $2.02\pm0.16$ & $3.01\pm0.14$ & $5.06\pm0.30$ \\
$|\Delta v_x|$    & $3.00\pm0.00$ & $3.00\pm0.00$ & $5.00\pm0.00$ \\
$|\Delta v_y|$    & $3.00\pm0.02$ & $3.00\pm0.02$ & $5.00\pm0.08$ \\
$|\Delta v_z|$    & $3.00\pm0.04$ & $3.00\pm0.04$ & $5.01\pm0.14$ \\
\bottomrule
\end{tabular}
\end{table}

These measurements admit a clean interpretation. \Collo{} achieves $\alpha\approx3$ across all components, the local-truncation signature of a globally second-order method (Section~\ref{sec:theory}). The \Stag{} scheme yields $\alpha\approx2$ for $s$ and $\theta$, confirming its first-order position update, with a broader $\phi$ distribution reflecting the heightened sensitivity of the toroidal angle to the metric. Crucially, the velocity components of both Boris variants agree at $\alpha\approx3$, demonstrating that the accuracy difference resides entirely in the position pathway. RK4 attains $\alpha\approx5$, consistent with its fourth-order truncation. The maps also show localized order elevation near high-symmetry phases (e.g.\ $\phi_0=0$), where the leading truncation coefficient becomes anomalously small; this pattern strengthens in the more symmetric QA field (Appendix~\ref{app:qa_results}) but averages out of the population means.

\subsection{Benchmark}\label{sec:benchmark}

The benchmark particle is a $1\,\mathrm{MeV}$ proton launched at $s_0=0.5$, $(\theta_0,\phi_0)=(0,0)$ with pitch angle $80^\circ$. The step is given in units of a reference cyclotron period $T_c=2\pi m/(|q|B_{\mathrm{ref}})$ with $B_{\mathrm{ref}}=5.5\,\mathrm{T}$, close to the orbit-averaged $|\vb{B}|$ of this reactor-scale QA equilibrium, so $\dt/T_c$ is the fraction of a gyration per step. At this reference one bounce (banana) period is $\approx5500\,T_c$. We follow the particle with all three schemes (\Stag{}, \Collo{}, RK4) at the common stable step $\dt=1/512$---the coarsest step at which all three still trace an intact banana (see Section~\ref{sec:varstep})---over the full $22000\,T_c$ run (four bounce periods); Fig.~\ref{fig:benchmark} shows the poloidal projection $(x,y)=(\sqrt{s}\cos\theta,\sqrt{s}\sin\theta)$.

\begin{figure*}[t!]
\centering
\includegraphics[width=\textwidth]{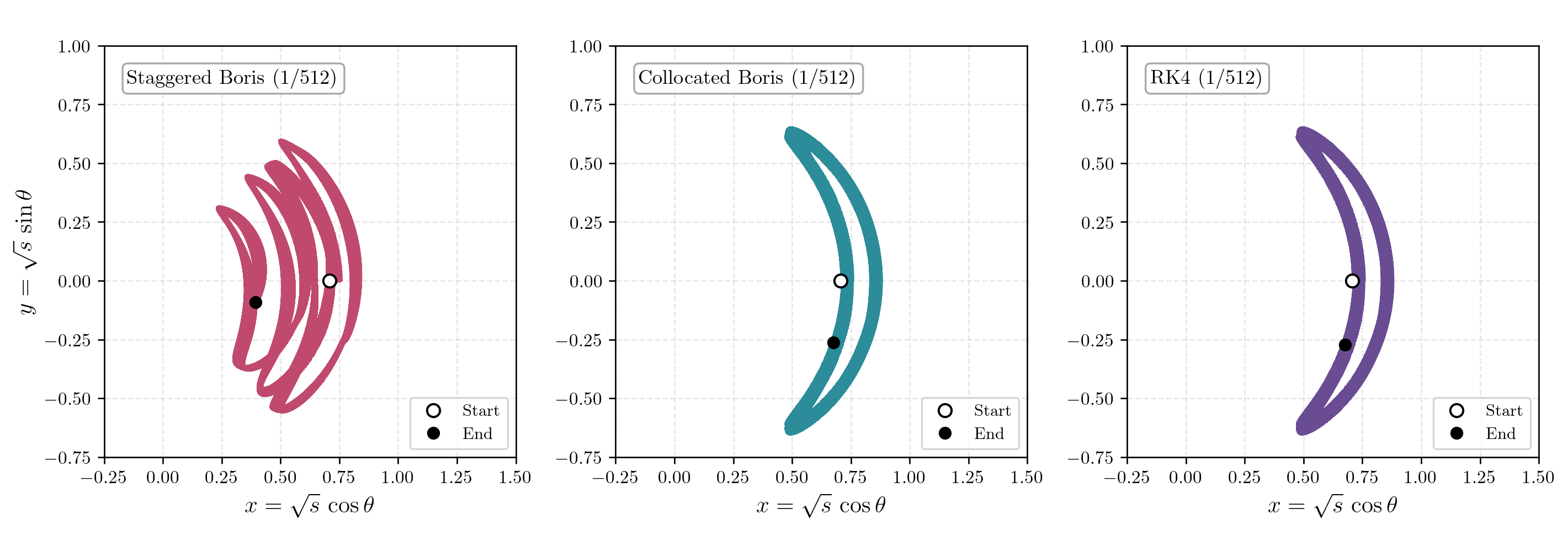}
\caption{QA benchmark orbit at $\dt=1/512$ over $22000\,T_c$ (four bounce periods), poloidal projection $(x,y)=(\sqrt{s}\cos\theta,\sqrt{s}\sin\theta)$ on common fixed axes; open and filled circles mark the launch and final points. Left to right: \Stag{}, \Collo{}, RK4.}
\label{fig:benchmark}
\end{figure*}

\begin{figure}[t!]
\centering
\includegraphics[width=\columnwidth]{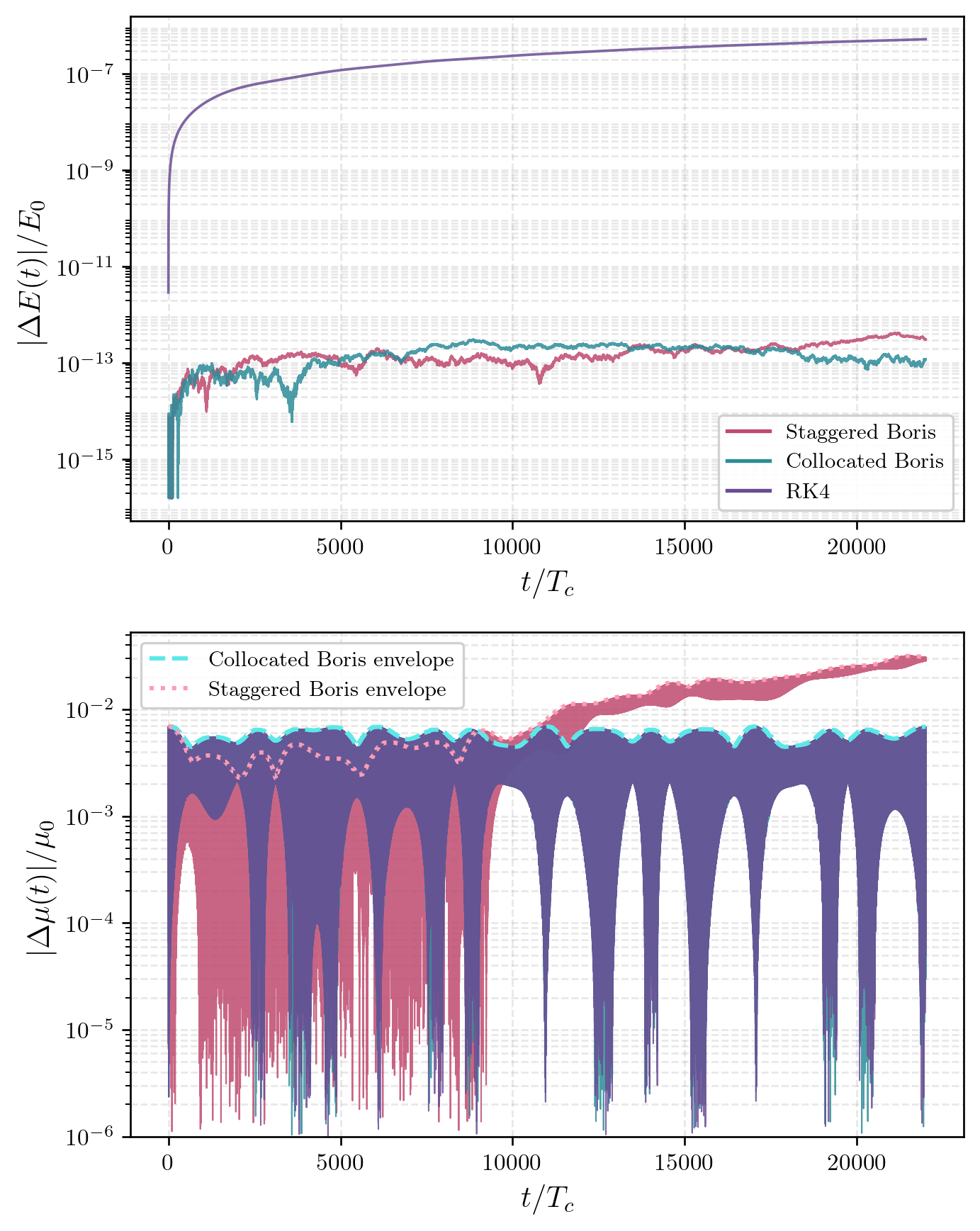}
\caption{Conservation diagnostics for the QA benchmark ($\dt=1/512$, $22000\,T_c$) on logarithmic axes. Top: relative energy error $|\Delta E(t)|/E_0$. Bottom: relative magnetic-moment error $|\Delta\mu(t)|/\mu_0$, with the \Collo{} and \Stag{} upper envelopes overdrawn. Schemes are distinguished by color (see legend).}
\label{fig:diagnostics}
\end{figure}

\Collo{} and RK4 trace the same banana, its bounces overlaid into a slowly drifting envelope, whereas the \Stag{} banana wanders radially inward---the coherent accumulation of its first-order position error. The schemes separate further in their conservation properties (Fig.~\ref{fig:diagnostics}). Both Boris variants hold the relative energy at $\sim10^{-13}$, the machine-roundoff level of a pure rotation, while RK4 drifts to $\sim5\times10^{-7}$. The magnetic moment is more discriminating: \Collo{} and RK4 keep $\mu$ bounded near $0.7\%$, whereas the \Stag{} scheme lets $\mu$ grow monotonically to $\sim3\%$ (Table~\ref{tab:conservation}). The \Stag{} scheme conserves energy but corrupts orbit topology; RK4 preserves the shape but dissipates energy; \Collo{} alone combines orbit correctness with machine-precision energy conservation and a bounded magnetic moment.

\begin{table}[H]
\centering\small
\caption{Conservation summary ($\dt=1/512$, $22000\,T_c$).}\label{tab:conservation}
\begin{tabular}{lccc}
\toprule
Invariant & \Stag{} & \Collo{} & RK4 \\
\midrule
$E/E_0$ & $\sim10^{-13}$ & $\sim10^{-13}$ & drift $5\!\times\!10^{-7}$ \\
$\mu$ & drift $3\!\times\!10^{-2}$ & $\lesssim10^{-2}$ & $\lesssim10^{-2}$ \\
\bottomrule
\end{tabular}
\end{table}

\subsection{Step-Size Robustness}\label{sec:varstep}

Long-time PIC simulations are most economical at the coarsest accurate step. We therefore measure, for each scheme (\Stag{}, \Collo{}, RK4), the coarsest step at which the benchmark particle still traces an intact banana over the full four-bounce horizon. Past this threshold the orbit degrades and no longer traces a faithful banana; coarser still, the integration diverges. The thresholds are listed in Table~\ref{tab:stability}; for each scheme, Fig.~\ref{fig:varstep} shows the orbits computed at step sizes just below and just above its threshold (Appendix~\ref{app:varstep_fig}).

\begin{table}[H]
\centering\small
\caption{Step-size robustness over $22000\,T_c$, $\dt$ referenced to $B_{\mathrm{ref}}=5.5\,\mathrm{T}$.}\label{tab:stability}
\begin{tabular}{lccc}
\toprule
$\dt/T_c$ & \Stag{} & \Collo{} & RK4 \\
\midrule
Intact banana & $<1/512$ & $\le 1$ & $\le 1/32$ \\
Diverges at & $1/256$ & $4$ & $1/2$ \\
\bottomrule
\end{tabular}
\end{table}

The contrast is striking. \Collo{} preserves an intact banana at a step $32\times$ coarser than RK4 ($\dt/T_c=1$ versus $1/32$) and $>500\times$ coarser than the \Stag{} scheme. At $\dt/T_c=1$ the \Collo{} banana is still intact, whereas RK4 holds it only to $1/32$, and the \Stag{} banana drifts even at its finest tested step (Fig.~\ref{fig:varstep}). Beyond its intact-banana threshold \Collo{} remains bounded up to $\dt/T_c=2$ and diverges only at $4$, demonstrating robustness that RK4 (diverging at $1/2$) lacks. The ranking \Collo{} $>$ RK4 $>$ \Stag{} on numerical robustness is therefore established.

Having established where each scheme breaks, we now examine how the surviving orbits depend on step size. Taking the $\dt=1/512$ run as a proxy for the true orbit, we measure for \Collo{} and RK4 the relative RMS deviation of $s$, $E_k$, and $\mu$ over the full $22000\,T_c$ run, at steps below each scheme's divergence threshold (Table~\ref{tab:stability})---up to $\dt/T_c=2$ for \Collo{} and $1/4$ for RK4. Because every step is a power-of-two multiple of $1/512$, all runs share integer-$T_c$ sample times, so the deviation is evaluated point-to-point without interpolation (Fig.~\ref{fig:selfconv}), with the underlying time series in Fig.~\ref{fig:devts}.

\begin{figure}[t!]
\centering
\includegraphics[width=\columnwidth]{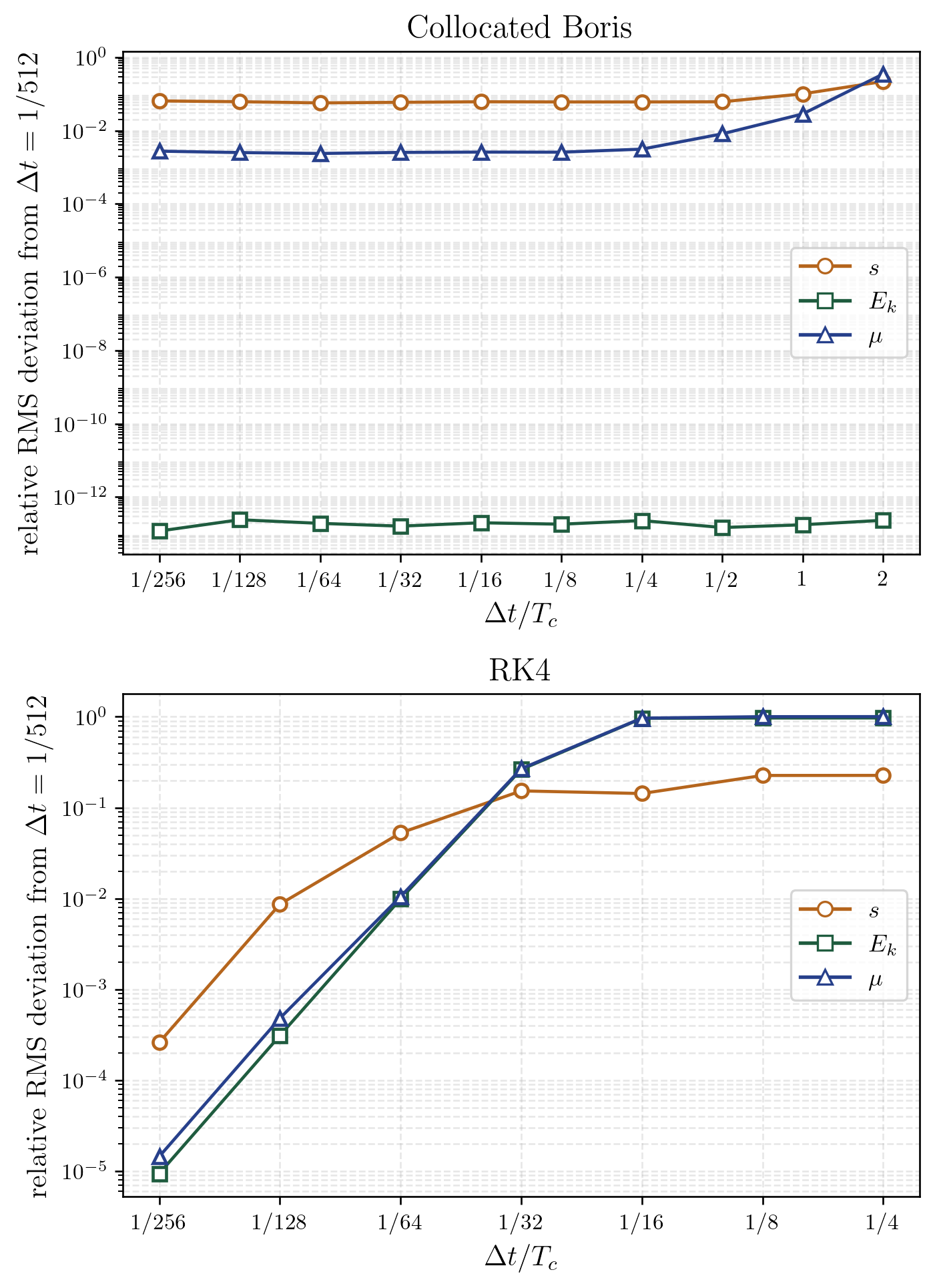}
\caption{Self-convergence to the $\dt=1/512$ reference: relative RMS deviation of $s$, $E_k$, and $\mu$ over the full $22000\,T_c$ run versus step size, for \Collo{} (top) and RK4 (bottom).}
\label{fig:selfconv}
\end{figure}

\begin{figure}[t!]
\centering
\includegraphics[width=\columnwidth]{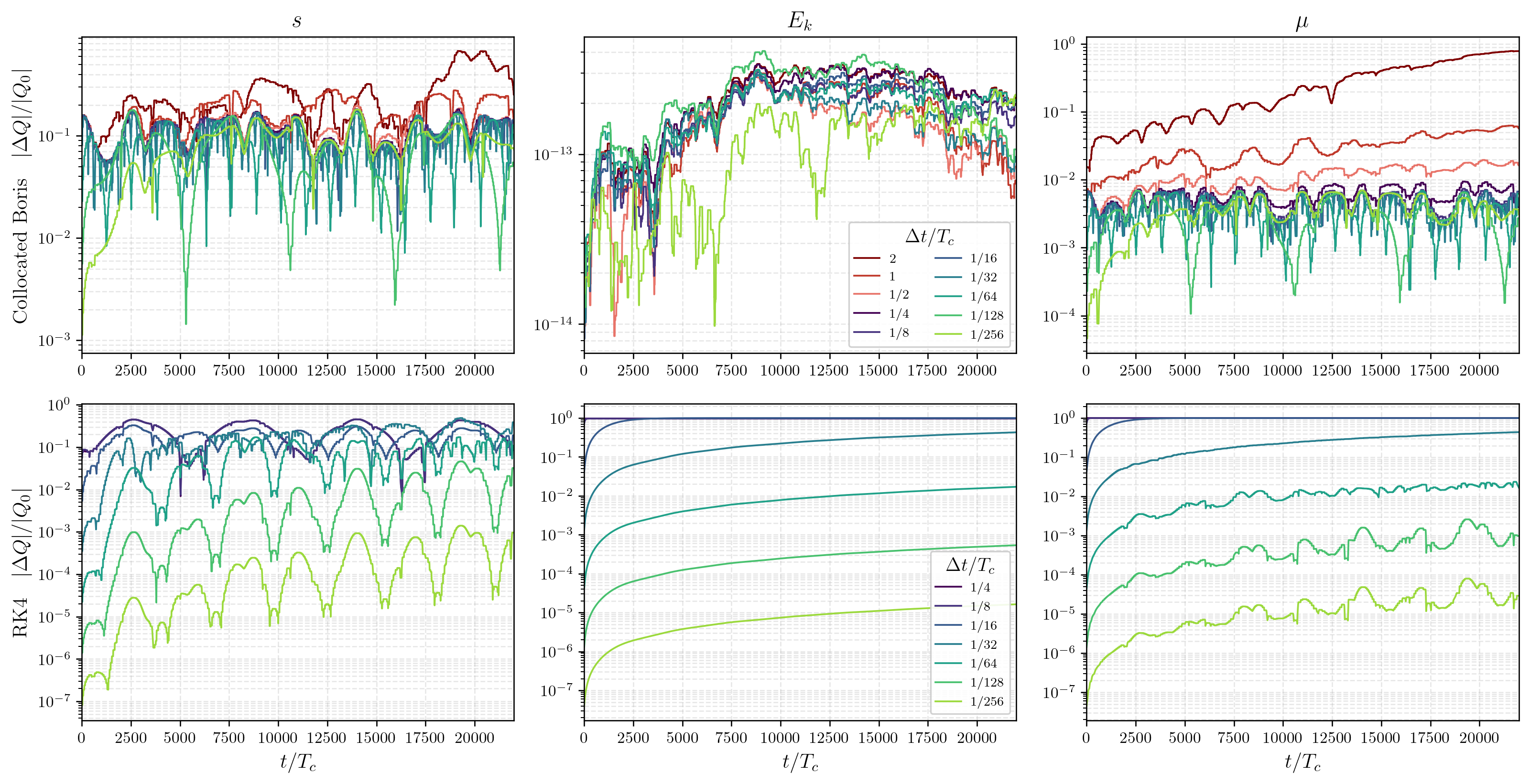}
\caption{Per-step deviation time series from the $\dt=1/512$ reference for \Collo{} (top) and RK4 (bottom), both on log $y$; columns $s$, $E_k$, $\mu$. Each curve is one valid step, color-coded by $\dt$.}
\label{fig:devts}
\end{figure}

The two schemes deviate for different reasons. For \Collo{}, the energy error stays at the $\sim10^{-13}$ machine-precision floor at every step, and the $s$ and $\mu$ deviations sit on a floor ($\sim6\times10^{-2}$ and $\sim2.6\times10^{-3}$) that is nearly independent of step size. This floor is set by the relative gyrophase between the two runs: $s$ and $\mu$ carry a gyro-scale oscillation, and \Collo{}'s second-order phase error saturates the gyrophase randomization at the oscillation amplitude even at the finest steps, so coarsening the step barely changes it. Beyond $\dt/T_c\approx1/2$, however, the deviation departs noticeably from this floor, as the accumulated second-order position truncation error grows and distorts the orbit itself.

RK4 behaves oppositely. At the finest steps (e.g.\ $\dt/T_c=1/256$) its deviation is far smaller than that of \Collo{}, but it rises steeply as the step coarsens (Fig.~\ref{fig:selfconv}), and each single time series grows steadily over the run (Fig.~\ref{fig:devts}). Although its deviation contains the same gyrophase component, this steep rise is dominated by genuine \emph{numerical dissipation}: unlike \Collo{}, which conserves energy to machine precision at every step, RK4 loses energy secularly, and because it is fourth-order per step this dissipation responds strongly to the step size. As the step is refined the curves therefore fall by orders of magnitude and separate clearly across step sizes (Fig.~\ref{fig:selfconv}), converging pointwise toward the reference; at the coarsest bounded steps the dissipation saturates the $E_k$ and $\mu$ errors near order unity.

\section{Theoretical Error Analysis}\label{sec:theory}

The experiments show a single-step error exponent $\alpha\approx3$ for \Collo{} and $\alpha\approx2$ for \Stag{}, i.e.\ second-order versus first-order global accuracy. We now prove that this one-order gap arises entirely from where the metric is evaluated in the position update, and that a predicted midpoint---never the exact one---suffices to close it.

We use the following conventions throughout. Normalized time is written $f(0)\equiv f(t_n)$, $f(1/2)\equiv f(t_n+\dt/2)$, $f(1)\equiv f(t_n+\dt)$. A tilde denotes a numerically computed quantity and an untilded symbol the corresponding exact one. The Boris rotation acts on the Cartesian velocity $\vb{v}=(v_x,v_y,v_z)$; the position $\vb{x}=(s,\theta,\phi)$, however, advances with the \emph{coordinate velocity}
\begin{equation}\label{eq:coordvel}
\vb{u}\equiv\Jinv\vb{v}=\dot{\vb{x}},\qquad u^i=\vb{e}^i\cdot\vb{v},
\end{equation}
whose contravariant components are contractions of $\vb{v}$ with the contravariant basis vectors $\vb{e}^i$ (the rows of $\Jinv$; Appendix~\ref{app:coords}), consistent with the equation of motion~\eqref{eq:eom} and the basis notation of Section~\ref{sec:comparison}. Here $\vb{x}=(s,\theta,\phi)$ carries the coordinates $\alpha^i$ as its components.

As a warm-up, in Cartesian coordinates the exact trajectory is $\vb{x}(1)=\vb{x}(0)+\vb{v}(0)\dt+\tfrac12\dot{\vb{v}}(0)\dt^2+\Oh{\dt^3}$, while the Boris midpoint update is $\vb{x}_B(1)=\vb{x}(0)+\vb{v}(1/2)\dt$ with $\vb{v}(1/2)=\vb{v}(0)+\tfrac12\dot{\vb{v}}(0)\dt+\Oh{\dt^2}$; subtracting gives $\vb{x}(1)-\vb{x}_B(1)=\Oh{\dt^3}$, the familiar second-order accuracy. Constant Cartesian basis vectors make this automatic. In curvilinear coordinates the basis evolves along the trajectory, and the exact coordinate evolution is
\begin{equation}\label{eq:exact_coord}
\begin{aligned}
\alpha^i(1) ={}& \alpha^i(0)+[\vb{e}^i(0)\cdot\vb{v}(0)]\,\dt\\
&+\tfrac12\bigl[\vb{e}^i(0)\cdot\dot{\vb{v}}(0)+\dot{\vb{e}}^i(0)\cdot\vb{v}(0)\bigr]\dt^2+\Oh{\dt^3}.
\end{aligned}
\end{equation}

The argument proceeds in four steps: Theorem~\ref{thm:naive} identifies the first-order defect of the starting-point metric; Corollary~\ref{cor:aligned} shows that an exact midpoint metric would remove it; Corollary~\ref{cor:velpred} shows that the velocity rotation survives replacing the exact midpoint by a predicted one; and Theorem~\ref{thm:pc} shows that the position does too, so the assembled scheme is second-order.

\begin{theorem}[\Stag{} is first-order in position]\label{thm:naive}
Advancing the position with the metric frozen at the starting point gives a local truncation error $\Oh{\dt^2}$ in the coordinates, i.e.\ first-order global accuracy.
\end{theorem}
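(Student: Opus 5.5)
We compare one \Stag{} position step against the exact expansion~\eqref{eq:exact_coord}, taking the staggered velocity to be exact at the half step: $\tilde{\vb{v}}_{n+1/2}=\vb{v}(1/2)$. This matches the initialization protocol of Section~\ref{sec:error_order}, and it isolates the position pathway, because any velocity error of $\Oh{\dt^2}$ enters the position only multiplied by $\dt$. The \Stag{} update reads, componentwise,
\begin{equation*}
\tilde\alpha^i(1)=\alpha^i(0)+\dt\,\vb{e}^i(0)\cdot\vb{v}(1/2).
\end{equation*}

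First, Taylor expand the half-step velocity: $\vb{v}(1/2)=\vb{v}(0)+\tfrac12\dot{\vb{v}}(0)\dt+\Oh{\dt^2}$. Substituting gives
\begin{equation*}
\tilde\alpha^i(1)=\alpha^i(0)+[\vb{e}^i(0)\cdot\vb{v}(0)]\dt+\tfrac12[\vb{e}^i(0)\cdot\dot{\vb{v}}(0)]\dt^2+\Oh{\dt^3}.
\end{equation*}

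Second, subtract this from~\eqref{eq:exact_coord}. The zeroth- and first-order terms cancel, as does the $\vb{e}^i\cdot\dot{\vb{v}}$ part of the $\dt^2$ term. What remains is
\begin{equation*}
\alpha^i(1)-\tilde\alpha^i(1)=\tfrac12\,\dot{\vb{e}}^i(0)\cdot\vb{v}(0)\,\dt^2+\Oh{\dt^3}.
\end{equation*}
Here $\dot{\vb{e}}^i=(\vb{u}\cdot\nabla_{\alpha})\vb{e}^i=\sum_j u^j\,\partial_j\vb{e}^i$ is the derivative of the contravariant basis along the trajectory. It vanishes identically only when $\Jinv$ is constant, i.e.\ for affine (Cartesian-like) coordinates. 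For flux coordinates this coefficient is generically nonzero, so the local truncation error is exactly $\Oh{\dt^2}$ and not better.

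Third, convert the local error to a global one in the standard way. Over $T/\dt$ steps, a one-step defect of $C\dt^2$ accumulates to $\Oh{\dt}$, using Lipschitz stability of the map on a bounded region (the Boris velocity rotation is norm-preserving, and $\Jinv$ is smooth away from the axis). This gives first-order global accuracy, consistent with the measured $\alpha\approx2$.

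The main obstacle is the staggered bookkeeping. We must argue that the half-step offset does not secretly cancel the defect, as it would if one compared against $\vb{x}_{n+1/2}$-centered quantities. We must also show that the velocity layer's own $\Oh{\dt^3}$ error does not feed back at lower order. I would handle both by carrying an explicit $\Oh{\dt^2}$ perturbation $\delta\vb{v}$ in $\tilde{\vb{v}}_{n+1/2}$ and checking that it contributes only $\dt\,\vb{e}^i\cdot\delta\vb{v}=\Oh{\dt^3}$.

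Finally, to show the order is sharp, I would exhibit the defect coefficient in a simple curvilinear example. For polar-like $(s,\theta)$ with circular gyration, $\dot{\vb{e}}^\theta\cdot\vb{v}\neq0$.
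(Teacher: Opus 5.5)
Your argument matches the paper's proof. Both substitute $\vb{v}(1/2)=\vb{v}(0)+\tfrac12\dot{\vb{v}}(0)\dt+\Oh{\dt^2}$ into the starting-point update, compare with Eq.~\eqref{eq:exact_coord}, and read off the uncancelled residual $\tfrac12[\dot{\vb{e}}^i(0)\cdot\vb{v}(0)]\dt^2$. Your additions go beyond the paper, which simply takes first-order global accuracy as the standard consequence of the $\Oh{\dt^2}$ residual, and all three are sound: the generic nonvanishing of that coefficient, the $\delta\vb{v}$ bookkeeping, and the Lipschitz local-to-global bound. One caveat: the Lipschitz bound gives only an upper bound of $\Oh{\dt}$ on the global error, and neither it nor the $\delta\vb{v}$ check actually rules out the global cancellation you worry about; the paper does not address this point either.
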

\begin{proof}
The \Stag{} update evaluates the metric at the start, $\tilde{\alpha}^i(1)=\alpha^i(0)+[\vb{e}^i(0)\cdot\vb{v}(1/2)]\dt$. Expanding $\vb{v}(1/2)=\vb{v}(0)+\tfrac12\dot{\vb{v}}(0)\dt+\Oh{\dt^2}$ and comparing with~\eqref{eq:exact_coord},
\begin{equation}\label{eq:naive_resid}
\tilde{\alpha}^i(1)=\alpha^i(1)-\tfrac12[\vb{v}(0)\cdot\dot{\vb{e}}^i(0)]\,\dt^2+\Oh{\dt^3}.
\end{equation}
The starting-point metric misses the $\tfrac12[\vb{v}\cdot\dot{\vb{e}}^i]\dt^2$ term generated by the evolution of the basis, leaving an $\Oh{\dt^2}$ residual. The position update is therefore first-order, matching the observed $\alpha\approx2$.
\end{proof}

\begin{corollary}[An aligned metric restores second order]\label{cor:aligned}
Evaluating the metric and the velocity at the exact trajectory midpoint reduces the local truncation error to $\Oh{\dt^3}$, i.e.\ second-order global accuracy.
\end{corollary}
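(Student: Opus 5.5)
The plan is to mirror the proof of Theorem~\ref{thm:naive}, replacing the starting-point metric with the midpoint one. The aligned update is
\[
\tilde{\alpha}^i(1)=\alpha^i(0)+[\vb{e}^i(1/2)\cdot\vb{v}(1/2)]\,\dt ,
\]
where both $\vb{e}^i(1/2)=\vb{e}^i(\vb{x}(1/2))$ and $\vb{v}(1/2)$ are taken on the exact trajectory. The argument Taylor-expands the product $g(t)\equiv\vb{e}^i(\vb{x}(t))\cdot\vb{v}(t)=u^i(t)$ about $t_n$ and compares the result with~\eqref{eq:exact_coord}.

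The key steps are as follows.
\begin{itemize}
\item Expand each factor to first order:
\[
\vb{e}^i(1/2)=\vb{e}^i(0)+\tfrac12\dot{\vb{e}}^i(0)\,\dt+\Oh{\dt^2},\qquad \vb{v}(1/2)=\vb{v}(0)+\tfrac12\dot{\vb{v}}(0)\,\dt+\Oh{\dt^2}.
\]
Here $\dot{\vb{e}}^i=(\partial_j\vb{e}^i)\,u^j$ is the derivative along the trajectory, which is the same quantity that appears in~\eqref{eq:exact_coord}.
\item Multiply the two expansions. This gives
\[
g(1/2)=g(0)+\tfrac12\bigl[\vb{e}^i(0)\cdot\dot{\vb{v}}(0)+\dot{\vb{e}}^i(0)\cdot\vb{v}(0)\bigr]\dt+\Oh{\dt^2},
\]
so multiplying by $\dt$ reproduces exactly the $\dt$ and $\tfrac12\dt^2$ terms of~\eqref{eq:exact_coord}. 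The subtraction then leaves $\tilde\alpha^i(1)-\alpha^i(1)=\Oh{\dt^3}$. In contrast to~\eqref{eq:naive_resid}, the $\tfrac12[\vb{v}\cdot\dot{\vb{e}}^i]\dt^2$ term is now captured.
\item A cleaner view is that this is the midpoint quadrature rule applied to $\alpha^i(1)-\alpha^i(0)=\int_{t_n}^{t_n+\dt}g(t)\,dt$. The midpoint rule has error $\tfrac{1}{24}\ddot g(\xi)\,\dt^3$, and $\ddot g$ is bounded because $\Jinv$ and $\vb{B}_{\mathrm{cart}}$ are smooth. This confirms that the bound is $\Oh{\dt^3}$ and identifies its coefficient.
\item Finally, local error $\Oh{\dt^3}$ together with the standard Lipschitz/stability argument yields global error $\Oh{\dt^2}$. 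The velocity is unaffected, because the Boris rotation with the field taken at the midpoint is already second order.
\end{itemize}

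The main subtlety is not the algebra but the requirement that \emph{both} factors be evaluated at the same midpoint. Evaluating only the metric at $t_n+\dt/2$, while pairing it with $\vb{v}(0)$, would again drop the $\tfrac12\vb{e}^i\cdot\dot{\vb{v}}$ term. The proof should therefore state explicitly that the symmetric pairing is what cancels both halves of the $\dt^2$ coefficient.

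A second point to handle is the smoothness needed for the $\Oh{\dt^2}$ remainders to be uniform. The required assumptions are $C^2$ regularity of $\vb{e}^i$ and $\vb{B}$ along the orbit, which excludes the magnetic axis $s=0$, and these should be stated as hypotheses.
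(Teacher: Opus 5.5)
Your proposal is correct and follows the paper's proof essentially verbatim. You expand $\vb{e}^i(1/2)$ and $\vb{v}(1/2)$ to first order, note that the cross term supplies the $\tfrac12[\vb{v}(0)\cdot\dot{\vb{e}}^i(0)]\dt^2$ contribution missing from~\eqref{eq:naive_resid}, and match the result against~\eqref{eq:exact_coord}; your extras are welcome sharpenings the paper does not give, namely the midpoint-quadrature view $\alpha^i(1)-\alpha^i(0)=\int g\,dt$ with exact remainder $\tfrac{1}{24}\ddot g(\xi)\dt^3$ and the explicit smoothness hypotheses away from $s=0$.
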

\begin{proof}
With the metric and the velocity taken at the exact midpoint, the update reads $\alpha^i(0)+[\vb{e}^i(1/2)\cdot\vb{v}(1/2)]\,\dt$. Expanding both factors to first order, $\vb{e}^i(1/2)=\vb{e}^i(0)+\tfrac12\dot{\vb{e}}^i(0)\dt+\Oh{\dt^2}$ and $\vb{v}(1/2)=\vb{v}(0)+\tfrac12\dot{\vb{v}}(0)\dt+\Oh{\dt^2}$, the cross term supplies exactly the $\tfrac12[\vb{v}(0)\cdot\dot{\vb{e}}^i(0)]\dt^2$ contribution absent from~\eqref{eq:naive_resid}, so that
\begin{equation}\label{eq:star}
\alpha^i(0)+[\vb{e}^i(1/2)\cdot\vb{v}(1/2)]\,\dt=\alpha^i(1)+\Oh{\dt^3}.
\end{equation}
The midpoint metric cancels precisely the residual of Theorem~\ref{thm:naive}.
\end{proof}

\begin{corollary}[The predicted midpoint preserves the velocity order]\label{cor:velpred}
When the midpoint is located by the explicit prediction $\vb{x}_*$, the rotated velocity and its symmetric average satisfy
\begin{equation}
\tilde{\vb{v}}(1)=\vb{v}(1)+\Oh{\dt^3},\qquad
\tilde{\vb{v}}_{\mathrm{avg}}=\vb{v}(1/2)+\Oh{\dt^2}.
\end{equation}
\end{corollary}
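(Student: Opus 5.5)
The plan is to first measure how far the predicted midpoint is from the exact one, then push that discrepancy through the field evaluation, and finally use the structure of the Boris rotation to show it enters the velocity only at third order.

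\emph{Step 1: locate the predicted midpoint.} From \eqref{eq:pc_pred}, $\vb{x}_*=\vb{x}(0)+\tfrac{\dt}{2}\vb{u}(0)$. Taylor expanding the exact trajectory gives $\vb{x}(1/2)=\vb{x}(0)+\tfrac{\dt}{2}\vb{u}(0)+\Oh{\dt^2}$. Hence
\[
\vb{x}_*=\vb{x}(1/2)+\Oh{\dt^2}.
\]
Since $\vb{B}_{\mathrm{cart}}$ is smooth, it follows that $\vb{B}(\vb{x}_*)=\vb{B}(1/2)+\Oh{\dt^2}$.

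\emph{Step 2: expand the rotation.} The Boris map \eqref{eq:boris_rotate} is algebraically equivalent to the implicit midpoint relation
\[
\tilde{\vb{v}}(1)-\vb{v}(0)=\tfrac{q\dt}{m}\,\tfrac{\vb{v}(0)+\tilde{\vb{v}}(1)}{2}\times\vb{B}(\vb{x}_*).
\]
This is the standard identity: use $\vb{v}^{\pm}$ and note that $\vb{s}$ is chosen precisely so that the equivalence holds.

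\begin{itemize}
\item First compare with the ideal map that uses the exact $\vb{B}(1/2)$. The right-hand side carries a prefactor $\dt$, so an $\Oh{\dt^2}$ perturbation of $\vb{B}$ changes $\tilde{\vb{v}}(1)$ by only $\Oh{\dt^3}$. The map is well conditioned for small $\dt$: the linear operator $I-\tfrac{q\dt}{2m}[\,\cdot\times\vb{B}]$ has an inverse that is $I+\Oh{\dt}$.
\item Next, the ideal implicit-midpoint map with the exact midpoint field is second order. Expand $\vb{v}(1)$ and $\vb{v}(0)$ about $t_{n+1/2}$: odd terms cancel, and the residual $\vb{v}(1)-\vb{v}(0)-\dt\,\tfrac{q}{m}\tfrac{\vb{v}(0)+\vb{v}(1)}{2}\times\vb{B}(1/2)$ is $\Oh{\dt^3}$. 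The ingredients are that $\tfrac{\vb{v}(0)+\vb{v}(1)}{2}=\vb{v}(1/2)+\Oh{\dt^2}$ and that $\dot{\vb{v}}(1/2)=\tfrac{q}{m}\vb{v}(1/2)\times\vb{B}(1/2)$.
\item Subtracting the two relations and inverting the near-identity operator gives $\tilde{\vb{v}}(1)=\vb{v}(1)+\Oh{\dt^3}$.
\end{itemize}

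\emph{Step 3: the average.} Write $\tilde{\vb{v}}_{\mathrm{avg}}=\tfrac12(\vb{v}(0)+\tilde{\vb{v}}(1))=\tfrac12(\vb{v}(0)+\vb{v}(1))+\Oh{\dt^3}$. The trapezoid-versus-midpoint expansion gives $\tfrac12(\vb{v}(0)+\vb{v}(1))=\vb{v}(1/2)+\tfrac18\ddot{\vb{v}}\dt^2+\Oh{\dt^3}$. Together these yield $\tilde{\vb{v}}_{\mathrm{avg}}=\vb{v}(1/2)+\Oh{\dt^2}$, as claimed.

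\emph{Main obstacle.} I expect Step 2 to be the main obstacle. Specifically, one must confirm that the $\Oh{\dt^2}$ midpoint error from the crude prediction costs only $\Oh{\dt^3}$ in the velocity. This hinges on the explicit factor $\dt$ multiplying $\vb{B}$ in the rotation, and on using the implicit-midpoint form rather than expanding $\vb{s}$ and $\vb{t}$ term by term. I would verify the equivalence to that form first, since everything else follows by routine Taylor expansion.
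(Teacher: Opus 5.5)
Your proof is correct and has the same skeleton as the paper's: the predicted midpoint is accurate to $\Oh{\dt^2}$, the field error is therefore $\Oh{\dt^2}$, the explicit $\dt$ in the rotation suppresses it to $\Oh{\dt^3}$, and the average follows. Where you differ is in how the central step is justified.

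The paper perturbs $\vb{t}$ and $\vb{s}$ directly in the explicit Boris formula, obtaining $\tilde{\vb{t}}=\vb{t}(1/2)+\Oh{\dt^3}$ and $\tilde{\vb{s}}=\vb{s}(1/2)+\Oh{\dt^3}$. It then asserts that the rotation with these vectors returns $\vb{v}(1)+\Oh{\dt^3}$. That last equality relies on the fact that a Boris rotation driven by the exact midpoint field is itself locally third-order, and the paper never proves this.

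You recast the rotation as the implicit-midpoint relation $\tilde{\vb{v}}(1)-\vb{v}(0)=\frac{q\dt}{2m}\bigl(\vb{v}(0)+\tilde{\vb{v}}(1)\bigr)\times\vb{B}(\vb{x}_*)$, which the choice $\vb{s}=2\vb{t}/(1+|\vb{t}|^2)$ solves exactly. This gives both ingredients from one equation:
\begin{itemize}
\item the perturbation estimate is linear in $\vb{B}$ with an explicit factor $\dt$;
\item the consistency estimate follows from the symmetric expansion about $t_{n+1/2}$ together with $\dot{\vb{v}}(1/2)=\frac{q}{m}\vb{v}(1/2)\times\vb{B}(1/2)$.
\end{itemize}
Subtracting the two relations and inverting $I-\frac{q\dt}{2m}[\,\cdot\times\vb{B}(\vb{x}_*)]$ closes the argument. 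Because this is the identity minus a skew-symmetric operator, its inverse has norm at most one for every $\dt$, not only small ones.

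The paper's route is shorter and stays closer to the algorithm as coded. Yours is self-contained and fills the consistency gap left by the paper's ``therefore.'' Your explicit $\tfrac18\ddot{\vb{v}}\,\dt^2$ term in Step 3 also shows that the $\Oh{\dt^2}$ bound on $\tilde{\vb{v}}_{\mathrm{avg}}$ is generically sharp rather than merely an upper bound.
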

\begin{proof}
The predicted midpoint is accurate to second order, $\vb{x}_*=\vb{x}(1/2)+\Oh{\dt^2}$, so the smooth field inherits the same error, $\tilde{\vb{B}}=\vb{B}(\vb{x}_*)=\vb{B}(1/2)+\Oh{\dt^2}$. The Boris rotation vectors carry an explicit factor $\dt$, $\vb{t}=(q\dt/2m)\vb{B}$, so this error is suppressed by one order, $\tilde{\vb{t}}=\vb{t}(1/2)+\Oh{\dt^3}$ and $\tilde{\vb{s}}=\vb{s}(1/2)+\Oh{\dt^3}$. The rotation therefore gives
\begin{equation}
\tilde{\vb{v}}(1)=\vb{v}(0)+\bigl(\vb{v}(0)+\vb{v}(0)\times\tilde{\vb{t}}\bigr)\times\tilde{\vb{s}}=\vb{v}(1)+\Oh{\dt^3},
\end{equation}
and averaging with the initial velocity gives $\tilde{\vb{v}}_{\mathrm{avg}}=\tfrac12[\vb{v}(0)+\tilde{\vb{v}}(1)]=\tfrac12[\vb{v}(0)+\vb{v}(1)]+\Oh{\dt^3}=\vb{v}(1/2)+\Oh{\dt^2}$.
\end{proof}

\begin{theorem}[The \Collo{} scheme is second-order]\label{thm:pc}
With the metric evaluated at the predicted midpoint and the position advanced by the symmetric average velocity, the coordinate update has local truncation error $\Oh{\dt^3}$; together with the second-order velocity rotation, the assembled scheme is second-order.
\end{theorem}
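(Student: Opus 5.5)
The proof compares the Collocated Boris coordinate update with the aligned-midpoint update of Corollary~\ref{cor:aligned} and shows the two differ only at $\Oh{\dt^3}$. By~\eqref{eq:pc_corr}, the numerical update reads
\begin{equation*}
\tilde{\alpha}^i(1)=\alpha^i(0)+\bigl[\vb{e}^i(\vb{x}_*)\cdot\tilde{\vb{v}}_{\mathrm{avg}}\bigr]\dt .
\end{equation*}
Subtracting~\eqref{eq:star} isolates
\begin{equation*}
\tilde{\alpha}^i(1)-\alpha^i(1)=\bigl[\vb{e}^i(\vb{x}_*)\cdot\tilde{\vb{v}}_{\mathrm{avg}}-\vb{e}^i(1/2)\cdot\vb{v}(1/2)\bigr]\dt+\Oh{\dt^3}.
\end{equation*}
It is therefore enough to show that the bracket is $\Oh{\dt^2}$. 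The explicit factor $\dt$ then lifts it to $\Oh{\dt^3}$, just as the factor $\dt$ in $\vb{t}$ did in Corollary~\ref{cor:velpred}.

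I split the bracket by adding and subtracting $\vb{e}^i(1/2)\cdot\tilde{\vb{v}}_{\mathrm{avg}}$:
\begin{equation*}
\bigl[\vb{e}^i(\vb{x}_*)-\vb{e}^i(1/2)\bigr]\cdot\tilde{\vb{v}}_{\mathrm{avg}}+\vb{e}^i(1/2)\cdot\bigl[\tilde{\vb{v}}_{\mathrm{avg}}-\vb{v}(1/2)\bigr].
\end{equation*}
For the first term, I will Taylor-expand $\vb{x}(1/2)=\vb{x}(0)+\tfrac{\dt}{2}\Jinv(\vb{x}_0)\vb{v}_0+\Oh{\dt^2}$. This shows the predictor~\eqref{eq:pc_pred} satisfies $\vb{x}_*=\vb{x}(1/2)+\Oh{\dt^2}$, the fact already used in Corollary~\ref{cor:velpred}. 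Smoothness of $\Jinv$ on the region away from the magnetic axis then gives $\vb{e}^i(\vb{x}_*)-\vb{e}^i(1/2)=\Oh{\dt^2}$. Since $\tilde{\vb{v}}_{\mathrm{avg}}$ is bounded, the first term is $\Oh{\dt^2}$. For the second term, Corollary~\ref{cor:velpred} gives $\tilde{\vb{v}}_{\mathrm{avg}}-\vb{v}(1/2)=\Oh{\dt^2}$, and $\vb{e}^i(1/2)$ is bounded, so this term is $\Oh{\dt^2}$ as well. Multiplying by $\dt$ gives the coordinate local truncation error $\Oh{\dt^3}$.

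The velocity half of the claim is the first estimate of Corollary~\ref{cor:velpred}, $\tilde{\vb{v}}(1)=\vb{v}(1)+\Oh{\dt^3}$. So both components of the state have local error $\Oh{\dt^3}$. The standard argument then gives second-order global accuracy: the one-step map is Lipschitz in $(\vb{x},\vb{v})$ with constant $1+\Oh{\dt}$, so summing $N=T/\dt$ local errors of size $\Oh{\dt^3}$ over a fixed horizon $T$ yields $\Oh{\dt^2}$.

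The main obstacle is the bookkeeping in Corollary~\ref{cor:velpred}. That result's input is $\vb{B}(\vb{x}_*)$, where $\vb{x}_*$ comes from the \emph{numerical} state; it is not the exact midpoint field. I will check that the exact Boris rotation driven by the exact midpoint field, $\mathcal{B}(\vb{v}(0),\vb{B}(1/2),\dt)$, matches $\vb{v}(1)$ to $\Oh{\dt^3}$. This follows from its being the implicit-midpoint (Cayley) rotation, which agrees with $\vb{v}(1)$ through $\Oh{\dt^2}$ when the field is frozen at the midpoint. I will also make explicit that all constants are uniform on a compact set excluding $s=0$, where $\Jinv$ is singular. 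If that uniformity fails near the axis, I would state the theorem on $s\ge s_{\min}>0$.
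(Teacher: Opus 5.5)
Your proposal is correct and takes essentially the same route as the paper: you write the update as $\alpha^i(0)+[\vb{e}^i(\vb{x}_*)\cdot\tilde{\vb{v}}_{\mathrm{avg}}]\dt$ and use $\vb{x}_*=\vb{x}(1/2)+\Oh{\dt^2}$ together with Corollary~\ref{cor:velpred} to show that both factors differ from their exact midpoint values by $\Oh{\dt^2}$; the factor $\dt$ then lifts this to $\Oh{\dt^3}$, and Eq.~\eqref{eq:star} closes the argument. Your additions fill in details the paper leaves implicit rather than changing the argument: the explicit add-and-subtract split, the check that the Boris (Cayley) rotation with the exact midpoint field reproduces $\vb{v}(1)$ to $\Oh{\dt^3}$ (which Corollary~\ref{cor:velpred} uses tacitly), the Lipschitz accumulation argument for global order, and the restriction to $s\ge s_{\min}>0$.
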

\begin{proof}
The scheme advances the position by $\tilde{\alpha}^i(1)=\alpha^i(0)+[\tilde{\vb{e}}^i(\vb{x}_*)\cdot\tilde{\vb{v}}_{\mathrm{avg}}]\dt$. The midpoint prediction gives $\tilde{\vb{e}}^i(\vb{x}_*)=\vb{e}^i(1/2)+\Oh{\dt^2}$, and Corollary~\ref{cor:velpred} gives $\tilde{\vb{v}}_{\mathrm{avg}}=\vb{v}(1/2)+\Oh{\dt^2}$. Both deviations from the exact midpoint values enter multiplied by $\dt$, hence contribute at $\Oh{\dt^3}$:
\begin{equation}
\begin{aligned}
\tilde{\alpha}^i(1)&=\alpha^i(0)+[\vb{e}^i(1/2)\cdot\vb{v}(1/2)]\,\dt+\Oh{\dt^3}\\
&=\alpha^i(1)+\Oh{\dt^3},
\end{aligned}
\end{equation}
the last equality by Corollary~\ref{cor:aligned}, Eq.~\eqref{eq:star}. The position update is therefore second-order, and by Corollary~\ref{cor:velpred} so is the velocity. The assembled \Collo{} scheme thus carries an $\Oh{\dt^3}$ local truncation error in both position and velocity, i.e.\ second-order global accuracy.
\end{proof}

\section{Conclusions}\label{sec:conclusions}

We have developed the \Collo{} scheme for charged-particle integration in curvilinear flux coordinates, in which the inverse Jacobian is evaluated at a predicted trajectory midpoint---the geometric correction required to keep the position update second-order. This restructuring---replacing the staggered layout with a collocated step whose metric is evaluated at a predicted midpoint rather than the starting point---reduces the position local truncation error from $\Oh{\dt^2}$ to $\Oh{\dt^3}$ at the cost of a second field evaluation per step, one more than \Stag{} but still half of RK4's four. Experiments over the QA and QH equilibria, 576 initial conditions, and 25 step sizes confirm $\alpha\approx3.0$ for \Collo{}, $\approx2.0$ for the \Stag{}, and $\approx5.0$ for RK4, in precise agreement with the theory. \Collo{} matches the orbit-topological correctness of RK4 while retaining machine-precision energy conservation and a bounded magnetic moment, and holds an intact banana at a step $32\times$ coarser than RK4 and $>500\times$ coarser than \Stag{}. A radar-like multi-attribute comparison of the three integrators (Fig.~\ref{fig:radar}) shows that no single scheme dominates every metric, but across accuracy order, per-step cost, energy and moment conservation, and large-step robustness, \Collo{} is the only one without a weak axis. For PIC simulations, \Collo{} thus offers the most favorable trade-off between geometric accuracy, long-term conservation and computation cost.

\begin{figure}[H]
\centering
\includegraphics[width=\columnwidth]{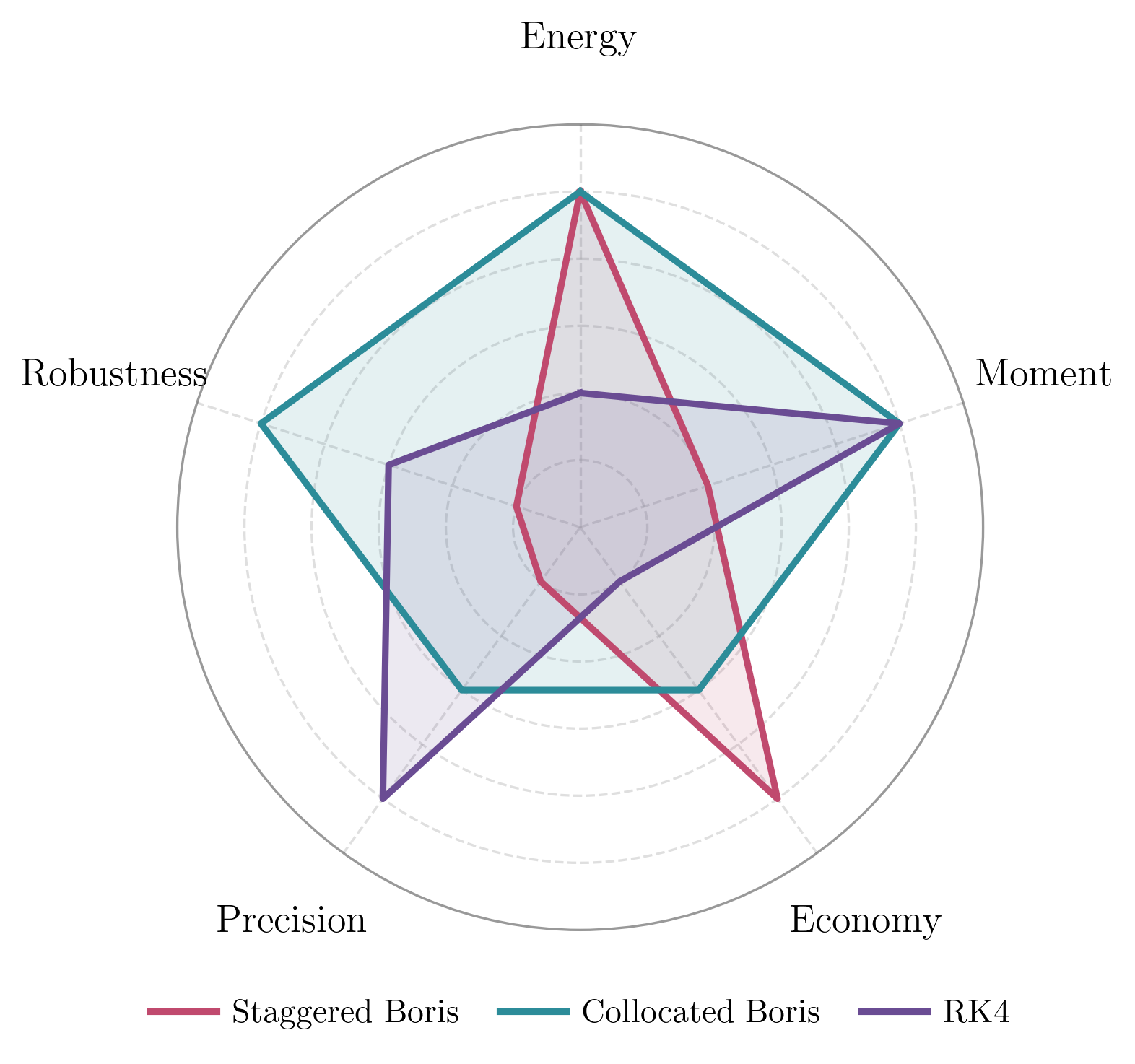}
\caption{Multi-attribute comparison of the three integrators; on every axis the outer edge is the better outcome. \textit{Precision}: accuracy order of the single-step error (Table~\ref{tab:error_qh_random}); \textit{Economy}: field evaluations per step, fewer being better (Table~\ref{tab:comparison}); \textit{Energy} and \textit{Moment}: long-term conservation of energy and magnetic moment (Fig.~\ref{fig:diagnostics}); \textit{Robustness}: retention of an intact orbit at coarse steps, from the survival thresholds and coarse-step self-convergence (Table~\ref{tab:stability}, Fig.~\ref{fig:selfconv}). Scores are an ordinal $1$--$5$ summary of those quantitative results. \Collo{} is the only scheme without a weak axis.}
\label{fig:radar}
\end{figure}

\begin{acknowledgments}
The authors thank the SIMPLE development team for the VMEC field-evaluation infrastructure.
\end{acknowledgments}

\section*{Code Availability}
The minimal code that reproduces the numerical results of this paper---a standalone full-orbit integrator implementing the three schemes compared here, together with the convergence-order experiments---is openly available at \url{https://github.com/Lisenbruth/collocated-boris-flux}.

\appendix

\section{\Collo{} Single-Step Algorithm and Diagnostics}\label{sec:algorithm}

The complete single-step procedure is given in Algorithm~\ref{alg:pcboris}. The scheme evaluates the Boris rotation at the predicted midpoint, and the symmetric velocity average preserves time-reversal symmetry; because the velocity is rotated in the Cartesian frame, the rotation is free of metric cross-terms.

\begin{algorithm}[H]
\caption{The \Collo{} Integrator}\label{alg:pcboris}
\small
\begin{algorithmic}[1]
\Require State $Z_n=(\vb{x}_n,\vb{v}_n)$; step $\dt$; mass $m$; charge $q$
\Statex \textit{--- 1. Midpoint Prediction ---}
\State $\Jinv_n \gets \mathrm{Oracle.InverseJacobian}(\vb{x}_n)$
\State $\vb{u}_n \gets \Jinv_n\vb{v}_n$
\State $\vb{x}_{\mathrm{mid}} \gets \vb{x}_n + \vb{u}_n\,\tfrac{\dt}{2}$
\Statex \textit{--- 2. Boris Cartesian Rotation ---}
\State $\Jinv_{\mathrm{mid}},\vb{B}_{\mathrm{mid}} \gets \mathrm{Oracle.MetricAndField}(\vb{x}_{\mathrm{mid}})$
\State $\vb{t} \gets \tfrac{q\dt}{2m}\vb{B}_{\mathrm{mid}}$
\State $\vb{v}' \gets \vb{v}_n + \vb{v}_n\times\vb{t}$
\State $\vb{s} \gets 2\vb{t}/(1+|\vb{t}|^2)$
\State $\vb{v}_{n+1} \gets \vb{v}_n + \vb{v}'\times\vb{s}$
\Statex \textit{--- 3. Averaged Position Update ---}
\State $\vb{v}_{\mathrm{avg}} \gets \tfrac{1}{2}(\vb{v}_n+\vb{v}_{n+1})$
\State $\vb{u}_{\mathrm{avg}} \gets \Jinv_{\mathrm{mid}}\vb{v}_{\mathrm{avg}}$
\State $\vb{x}_{n+1} \gets \vb{x}_n + \vb{u}_{\mathrm{avg}}\,\dt$
\Statex \textit{--- 4. Diagnostics and State Update ---}
\State $\vb{J}_{n+1},\vb{B}_{n+1},A_{\phi,n+1}^{\mathrm{mag}} \gets \mathrm{Oracle.QueryFull}(\vb{x}_{n+1})$
\State $E_k \gets \tfrac12 m|\vb{v}_{n+1}|^2$
\State $\mu \gets m|\vb{v}_{n+1}\times\hat{\vb{b}}_{n+1}|^2/(2|\vb{B}_{n+1}|)$
\State $P_\phi \gets m(\vb{v}_{n+1}\cdot\vb{J}_{:,3}(\vb{x}_{n+1})) + qA_{\phi,n+1}^{\mathrm{mag}}$
\State \Return $Z_{n+1}=(\vb{x}_{n+1},\vb{v}_{n+1})$ and diagnostics $(E_k,\mu,P_\phi)$
\end{algorithmic}
\end{algorithm}

Each step evaluates three invariants; the accuracy study in the main text uses the first two, with the canonical momentum available but not analyzed there. The kinetic energy $E_k=\tfrac12 m|\vb{v}_{n+1}|^2$ is conserved exactly by any Boris rotation in a static field. The magnetic moment $\mu=m|\vb{v}_{n+1}\times\hat{\vb{b}}_{n+1}|^2/(2|\vb{B}_{n+1}|)$, an adiabatic invariant, is a more stringent test. The canonical toroidal momentum $P_\phi=m(\vb{v}_{n+1}\cdot\vb{J}_{:,3}(\vb{x}_{n+1}))+qA_\phi^{\mathrm{mag}}$ measures respect for toroidal symmetry; strictly conserved only in axisymmetry, its variation in a stellarator is genuine physics that a faithful integrator should keep bounded.

\section{Coordinate System and Metric}\label{app:coords}

In $\alpha^i=(s,\theta,\phi)$ the covariant basis vectors are $\vb{e}_i=\partial\vb{x}/\partial\alpha^i$ with Jacobian $\mathcal{J}=\vb{e}_s\cdot(\vb{e}_\theta\times\vb{e}_\phi)$. Since $\vb{B}\cdot\nabla s=0$ for nested surfaces, the field is tangential, $\vb{B}=B^\theta\vb{e}_\theta+B^\phi\vb{e}_\phi$. From profiles $R(\alpha^i),Z(\alpha^i)$ with $\vb{x}=(R\cos\phi,R\sin\phi,Z)$, the metric is $g_{ij}=\vb{e}_i\cdot\vb{e}_j$:
\begin{equation}
\begin{aligned}
g_{ss}&=R_s^2+Z_s^2, & g_{s\theta}&=R_sR_\theta+Z_sZ_\theta,\\
g_{\theta\theta}&=R_\theta^2+Z_\theta^2, & g_{s\phi}&=R_sR_\phi+Z_sZ_\phi,\\
g_{\phi\phi}&=R_\phi^2+R^2+Z_\phi^2, & g_{\theta\phi}&=R_\theta R_\phi+Z_\theta Z_\phi,
\end{aligned}
\end{equation}
with Jacobian matrix $\vb{J}=[\vb{e}_s,\vb{e}_\theta,\vb{e}_\phi]$, contravariant basis $[\vb{e}^s,\vb{e}^\theta,\vb{e}^\phi]^T=\Jinv$, and $\mathcal{J}=\det\vb{J}=\sqrt{\det g_{ij}}$.

\section{QA Error Order Results}\label{app:qa_results}

The same stratified-random protocol as Section~\ref{sec:error_order}, applied to the QA configuration; the population statistics are collected in Table~\ref{tab:qa_random}.

\begin{table}[H]
\centering\small
\caption{Convergence exponents (QA, random $24\times24$, mean$\pm$std, $R^2\geq0.9$). $^{\dagger}$\Stag{} $|\Delta\phi|$ is elevated by the field symmetry (see text).}\label{tab:qa_random}
\begin{tabular}{lccc}
\toprule
Component & \Stag{} & \Collo{} & RK4 \\
\midrule
$|\Delta s|$      & $2.00\pm0.00$ & $3.01\pm0.09$ & $5.01\pm0.16$ \\
$|\Delta\theta|$  & $2.00\pm0.03$ & $3.00\pm0.05$ & $5.00\pm0.16$ \\
$|\Delta\phi|$    & $2.42\pm0.51^{\dagger}$ & $3.00\pm0.09$ & $5.01\pm0.19$ \\
$|\Delta v_x|$    & $3.00\pm0.06$ & $3.00\pm0.05$ & $5.00\pm0.10$ \\
$|\Delta v_y|$    & $3.00\pm0.01$ & $3.00\pm0.01$ & $5.00\pm0.04$ \\
$|\Delta v_z|$    & $3.02\pm0.14$ & $3.01\pm0.13$ & $5.10\pm0.38$ \\
\bottomrule
\end{tabular}
\end{table}

The QA statistics agree with QH at leading order, confirming that the convergence orders are inherent algorithmic properties. The high QA symmetry strengthens the order-elevation pattern noted in Section~\ref{sec:error_order}: for a growing subset of initial phases the leading truncation coefficient in the toroidal-angle error becomes accidentally small, so the fitted window is partly governed by the next-order term, lifting the marked \Stag{} $|\Delta\phi|$ entry above its neighbors.

\section{Step-Size Robustness Orbits}\label{app:varstep_fig}

For each scheme, Fig.~\ref{fig:varstep} (see next page) shows the poloidal orbit at step sizes just below and just above its intact-banana threshold (Table~\ref{tab:stability}), supporting the discussion of Section~\ref{sec:varstep}.

\begin{figure*}[t!]
\centering
\includegraphics[width=\textwidth]{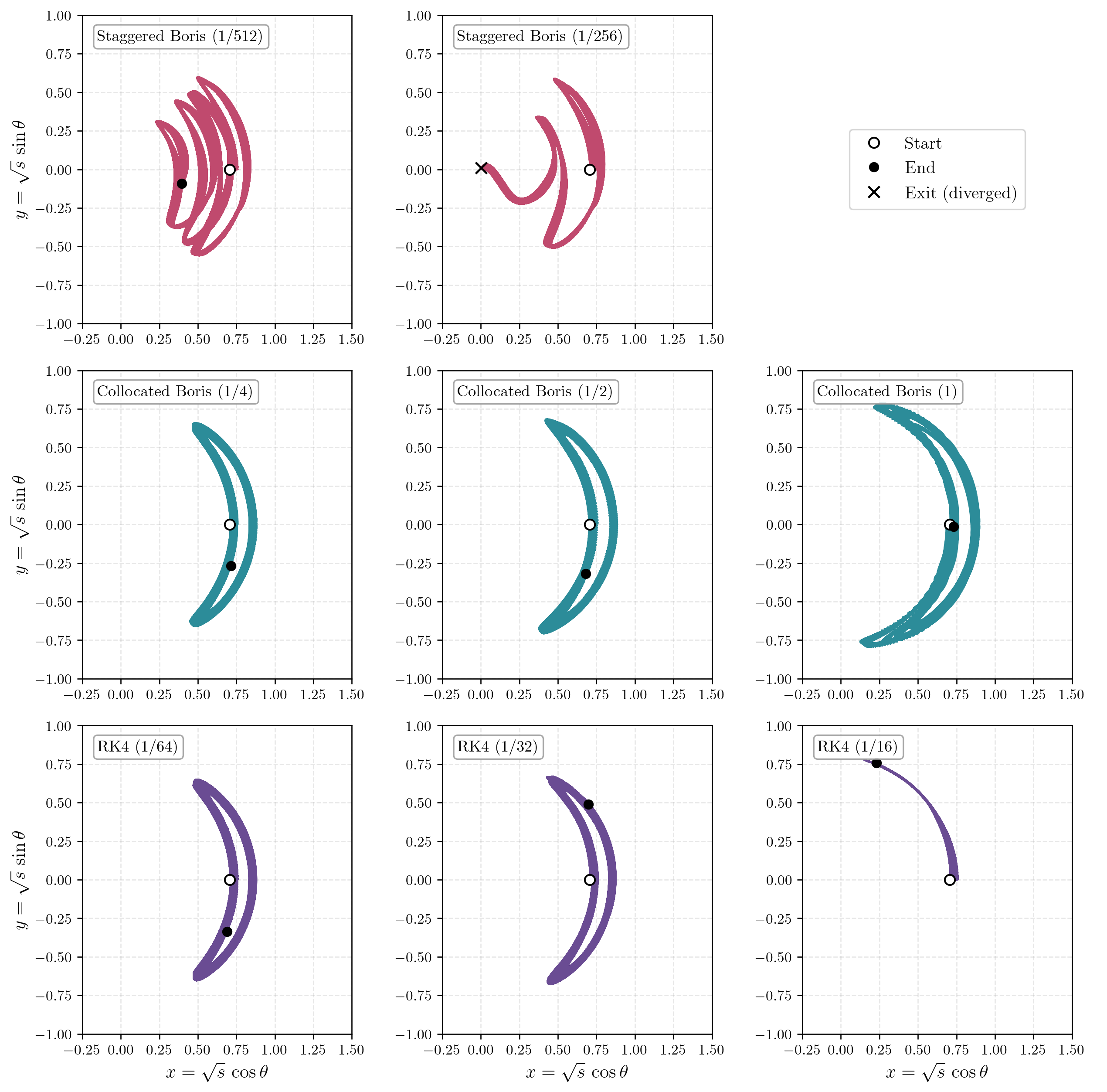}
\caption{Step-size robustness (poloidal projection over $22000\,T_c$, common fixed axes). Rows: \Stag{}, \Collo{}, RK4. Columns are increasing step size: \Stag{} at $\dt/T_c=1/512,1/256$; \Collo{} at $1/4,1/2,1$; RK4 at $1/64,1/32,1/16$. Open and filled circles mark the launch and final points; a cross marks an early exit (divergence). The unused third \Stag{} cell holds the shared legend.}
\label{fig:varstep}
\end{figure*}

\section{Experimental Parameters}\label{app:params}

The QA and QH equilibria are the reactor-scale low-resolution references of Landreman \& Paul~\cite{Landreman2022} (\texttt{wout\_LandremanPaul2021\_}\allowbreak\texttt{\{QA,QH\}\_reactorScale\_}\allowbreak\texttt{lowres\_reference.nc}). Table~\ref{tab:params} lists the run parameters. Time is normalized to $T_c=2\pi m/(|q|B_{\mathrm{ref}})$ with $B_{\mathrm{ref}}$ the tunable input \texttt{ref\_b\_tesla} (default $1\,\mathrm{T}$; here $5.5\,\mathrm{T}$); angles are kept in radians and never wrapped, so large accumulated values ($\sim10^2$ turns) are expected. In Fig.~\ref{fig:configs} the field is shown at the equilibria's native $\sim1\,\mathrm{T}$ normalization; the reactor-scale $B_{\mathrm{ref}}=5.5\,\mathrm{T}$ used in the runs is its orbit-averaged magnitude.

\begin{table}[H]
\centering\small
\caption{Experimental parameters.}\label{tab:params}
\footnotesize
\begin{tabular}{lll}
\toprule
Parameter & Benchmark/step & Error-order \\
\midrule
Configuration & QA & QA, QH \\
$s_0$ & 0.5 & 0.5 \\
Energy / pitch & $1\,\mathrm{MeV}$ / $80^\circ$ & $1\,\mathrm{MeV}$ / $80^\circ$ \\
$(\theta_0,\phi_0)$ & $(0,0)$ & $24\times24$ grid \\
Sampling & --- & stratified random \\
$\dt$ & power-of-two ladder & 25 log $[2\!\times\!10^{-7},8\!\times\!10^{-2}]$ \\
Steps & up to $22000\,T_c$ & 1 (time $=\dt$) \\
Reference & --- & RK4, $\dt/1000$ \\
$B_{\mathrm{ref}}$ & $5.5\,\mathrm{T}$ & --- \\
\bottomrule
\end{tabular}
\end{table}

\bibliography{references}

\end{document}